\theoremstyle{plain}
\newtheorem{theorem}{Theorem}
\newtheorem{statement}[theorem]{Statement}
\theoremstyle{definition}
\newtheorem{definition}{Definition}
\theoremstyle{remark}
\newtheorem{remark}{Remark}
\newcommand{\sect}[1]{\subsubsection*{\em\mdseries #1}}
\newcommand{\nn}{\mathbf n}
\title{On discrete integrable equations of higher order}
\author{V.E.~Adler\thanks{L.D.~Landau Institute for Theoretical Physics, 1A
Ak.~Semenov, Chernogolovka 142432, Russia. E-mail: adler@itp.ac.ru} ,\quad
V.V.~Postnikov\thanks{Sochi Branch of Peoples' Friendship University of Russia,
32 Kuibyshev str., 354000 Sochi, Russia. E-mail: postnikofvv@mail.ru}}
\date{3 October 2013}
\begin{document}\maketitle

\begin{abstract}\noindent
We study 2D discrete integrable equations of order $1$ with respect to one
independent variable and $m$ with respect to another one. A generalization of
the multidimensional consistency property is proposed for this type of
equations. The examples are related to the B\"acklund--Darboux transformations
for the lattice equations of Bogoyavlensky type.
\medskip

\noindent Keywords: 3D-consistency, evolutionary lattice equations
\medskip

\noindent MSC: 37K10, 37K35
\end{abstract}

\section{Introduction}\label{s:intro}

The paper is devoted to study of integrable equations with two discrete
independent variables, of the form
\begin{equation}\label{full_Qm}
 Q(v(i+1,n+m),\dots,v(i+1,n),v(i,n+m),\dots,v(i,n))=0
\end{equation}
where $m$ is a fixed positive integer, dependent variable $v$ and function $Q$
take real or complex values. We define the property of multidimensional
consistency for such equations and illustrate it by two examples related with
the Darboux transformations for spectral problems of order $m+1$. Recall that,
for the so-called {\em quad-equations}
\begin{equation}\label{Q1}
 Q(v(i+1,n+1),v(i+1,n),v(i,n+1),v(i,n))=0,
\end{equation}
3D-consistency means that a generic set of 2-dimensional initial data on a
3-dimensional lattice defines the function $v(i,j,n)$ which satisfies
simultaneously three equations of the form (\ref{Q1}), with respect to each
pair of discrete variables \cite{BS,NW,ABS03}. This automatically implies the
consistency on the lattice of arbitrary dimension.

A generalization of this notion for {\em multiquad-equations} (\ref{full_Qm})
is given in Section \ref{s:cons}. In this case, the variable $n$ is
distinguished and the situation is less symmetric: equations of the form
(\ref{full_Qm}) are fulfilled with respect to the variables $(i,n)$, $(j,n)$
and the variables $(i,j)$ correspond to some $m$-component quad-equation
\begin{equation}\label{R}
 V(i+1,j+1)=R(V(i+1,j),V(i,j+1),V(i,j))
\end{equation}
where $V=(v(n+m),\dots,v(n+1))$. Symbolically,
\[
 \text{quad}+\text{quad}+\text{quad}\quad\overset{1\,\to\,m}{\longrightarrow}\quad
 m\text{-quad}+m\text{-quad}+m\text{-component quad}.
\]
In the multidimensional lattice, equation (\ref{R}) satisfies the usual
3D-consistency property with respect to variables $i,j,k$ others than $n$.

\sect{Examples}

Equations (\ref{full_Qm}) appear in the theory of B\"acklund transformations
(or Darboux transformations, on the level of spectral problems) for
evolutionary differential-difference equations
\begin{equation}\label{ut}
 \partial_tu=A(u_m,\dots,u_{-m})
\end{equation}
where the notation $u_s=u(n+s,t)$ is used. Undoubtedly, most well-known
equation of this type is the Bogoyavlensky lattice \cite{N,I,B}
\[
 \partial_tu=u(u_m+\dots+u_1-u_{-1}-\dots-u_{-m}).
\]
Its B\"acklund transformation was obtained in papers \cite{THO,PN,S96}; a
detailed account can be found in book \cite{S03} (an approach based on the
discretization preserving Hamiltonian structure); among recent publications, we
mention \cite{FYIIN} (a relation with the generalized QD algorithm). In Section
\ref{s:B}, we reproduce this B\"acklund transformation in the form of an
equation of type (\ref{full_Qm}) which is consistent with a potential version
of the Bogoyavlensky lattice. In this context, the variable $n$ in
(\ref{full_Qm}) is inherited from the lattice equation and $i$ enumerates the
B\"acklund transformations. A new result in Section \ref{s:B} is the derivation
of consistent equation (\ref{R}) which corresponds to the nonlinear
superposition of the B\"acklund transformations. This equation turns out to be
a $m$-component reduction of 3D equation of Hirota type.

It should be noted that lattice equations of the form (\ref{ut}) are well
studied only at $m=1$. In this case, an exhaustive classification of equations
admitting higher symmetries was obtained by Yamilov \cite{Y83,Y06}. Relation of
such lattice equations with quad-equations is also well studied
\cite{LPSY,LY09,LY11,GGH,GY,MWX}. At $m>1$, search of new examples and study of
their properties remain an actual open problem. The theory of lattice equations
(\ref{ut}) is more complicated than the parallel theory of continuous
integrable evolutionary equations; this is explained by the fact that a single
continuous equation may correspond to an infinite family of discretizations of
different order. For instance, the Korteweg--de Vries equation is obtained
under the continuous limit from the Bogoyavlensky lattices for arbitrary $m$,
and the lattices corresponding to the different $m$ are not related to each
other and belong to different hierarchies.

In Section \ref{s:dSK}, we derive a new example of equation of the from
(\ref{full_Qm}). It defines the Darboux--B\"acklund transformation for the
nonhomogeneous generalization of the Bogoyavlensky lattice from our previous
article \cite{AP}. This lattice equation serves as a discretization of the
Sawada--Kotera equation and admits the Lax representation with the operator $L$
equal to the ratio of two difference operators. It is interesting that the
Darboux transformation posesses a similar structure. As in the case of the
Bogoyavlensky lattice, the nonlinear superposition principle brings to the
consistency with certain $m$-component equation of the form (\ref{R}).

\sect{Notations and assumptions}

A point in the multidimensional integer lattice is denoted as
$\nn=(n_1,n_2,\dots,n)$. The last coordinate $n$ is distinguished, the shifts
$T^s: n\to n+s$ with respect to this variable are denoted by the subscript $s$
and the zero subscript is omitted, like in equation (\ref{ut}). For the other
coordinates, we consider only the unit shifts $T_i:n_i\to n_i+1$ which are
denoted by the superscript $i$. In this notation, equation (\ref{full_Qm}) with
variable $i$ replaced by $n_i$ is written as
\begin{equation}\label{Qm}
 Q(v^i_m,\dots,v^i,v_m,\dots,v)=0
\end{equation}
and this is the form which we will use in what follows. We assume that this
equation is solvable with respect to any of corner variables $v^i_m,v^i,v_m$ or
$v$, for the generic values of the rest variables involved in the
equation\footnote{Depending on the context, the term `variable' is used either
for a function defined on the lattice, or for its value in one point.}. So, we
consider equation (\ref{Qm}) equivalent to equation of the form
\begin{equation}\label{q}
 v^i_m=q(v^i_{m-1},\dots,v^i,v_m,\dots,v),
\end{equation}
and analogously for the variables $v^i,v_m,v$. Moreover, we assume that the
nondegeneracy condition is fulfilled
\[
 \partial_{v^i}q\not\equiv0,\quad
 \partial_{v_m}q\not\equiv0,\quad
 \partial_vq\not\equiv0,
\]
in order to exclude from the consideration equations of the form $Q=Q'Q''=0$
where each factor depends on incomplete subset of corner variables. In fact,
the examples presented in this paper correspond to the case when $Q$ is an
irreducible polynomial of power 1 with respect to each its argument. It is easy
to see that in such a case all above stipulations are fulfilled.

In addition to the variable $v$ defined in the vertices of the integer lattice,
we consider also the variables defined on its edges. A variable associated with
the edge $(\nn,T_i(\nn))$ is denoted like $f^{(i)}$ (and it should be
distinguished from $f^i=T_i(f)$).

\sect{The general scheme}

In both examples, the dressing procedure is quite standard (it is similar to
the Veselov--Shabat approach in the continuous case \cite{VS}). Given a
discrete spectral problem of order $m$
\[
 L[u]\psi=\lambda\psi,
\]
the Darboux transformation is constructed by use of its particular solution
$\phi$ at $\lambda=\alpha$. It brings, for the function $f=\phi_1/\phi$, to the
pair of Miura type substitutions
\begin{equation}\label{uuf}
 u=a(f_m,\dots,f,\alpha),\quad \tilde u=b(f_m,\dots,f,\alpha)
\end{equation}
and the sequence of these substitutions (a discrete dressing chain) is
described by the equation
\[
 a(\tilde f_m,\dots,\tilde f,\tilde\alpha)=b(f_m,\dots,f,\alpha)
\]
where tilde is understood as the shift with respect to the second discrete
variable. Although this equation belongs to the type (\ref{Qm}), it {\em is
not} 3D-consistent in the above sense, because the variables $f$ are associated
with the edges of the lattice rather than the vertices, and in this situation
one should use another definition which generalizes the notion of Yang--Baxter
mappings, see e.g. \cite{V,ABS04,PTV,PT}. Certainly, both versions of
3D-consistency, for the vertices and the edges, are closely related. It turns
out that equations (\ref{uuf}) admit introducing of the potential (due to some
conservation law)
\[
 u=u(v_m,v),\quad f=f(\tilde v,v)
\]
and the new variable $v$ satisfies an equation of the form (\ref{Qm}) which
falls under our definition of consistency. The additional $m$-component
quad-equation (\ref{R}) corresponding to the nonlinear superposition of Darboux
transformations is found from the matrix representation.

It should be remarked that B\"acklund transformations equivalent to a pair of
Miura type substitutions are very common, but do not cover all known examples.
For instance, this scheme does not contain the quad-equation $Q_4$ which is the
most general known equation at $m=1$. This equation defines the B\"acklund
transformation for the elliptic Volterra lattice, and also the superposition of
B\"acklund transformation for the Krichever--Novikov equation. One may expect
that analogous examples exist for $m>1$ as well, but these are not discovered
yet.

\section{Multidimensional consistency}\label{s:cons}

Let us consider the 3-dimensional integer lattice $(n_i,n_j,n)$ and let the
real-valued variable $v$ defined on the lattice satisfies the
$m$-quad-equations with respect to each pair of discrete variables $n_i,n$ and
$n_j,n$:
\begin{equation}\label{QQ}
 Q^{(i)}(v^i_m,\dots,v^i,v_m,\dots,v)=0,\quad
 Q^{(j)}(v^j_m,\dots,v^j,v_m,\dots,v)=0.
\end{equation}
The solution of these equations on the coordinate sublattices $(n_i,0,n)$ and
$(0,n_j,n)$, with the generic initial data
\[
 v(0,0,n),\quad
 v(n_i,0,0),\dots,v(n_i,0,m-1),\quad
 v(0,n_j,0),\dots,v(0,n_j,m-1),
\]
can be constructed by solving the equations with respect to the corner
variables (see fig.~\ref{fig:ini} which illustrates the case $m=2$). The
extension of these solutions on the whole 3D lattice requires certain
compatibility conditions which we will analyze now.

\begin{figure}[b]
\centerline{\includegraphics[width=0.6\textwidth]{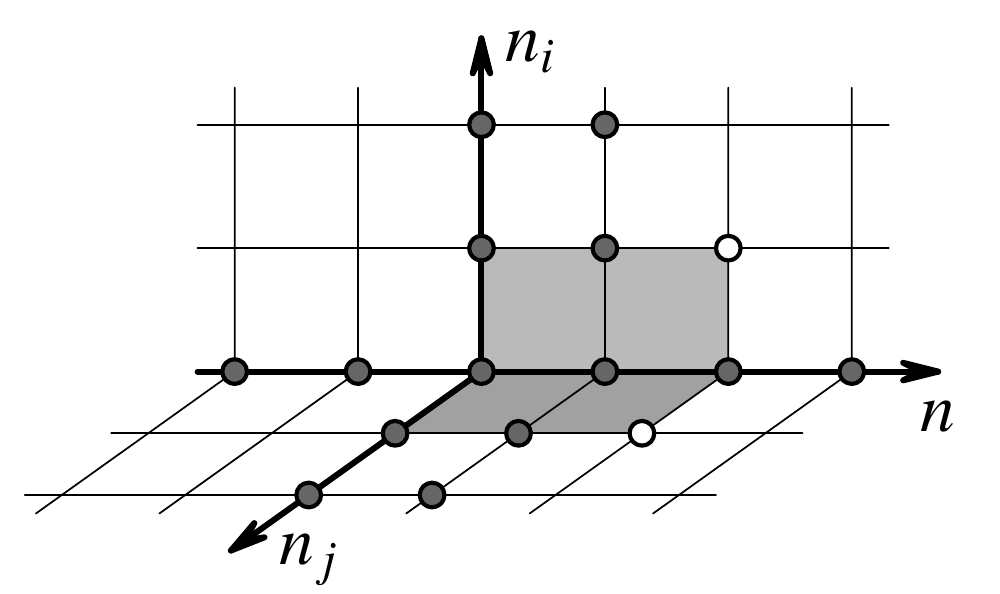}}
\captionsetup{width=0.6\textwidth}
\caption{Pair of equations (\ref{QQ}), $m=2$. Initial data are marked with
the dark discs.}
\label{fig:ini}
\end{figure}

For a moment, it is convenient to identify the reference point $\nn$ on the
lattice with the origin $(0,0,0)$. The values $v^i_s=v(1,0,s)$,
$v^j_s=v(0,1,s)$ are computed from the initial data according to equations
(\ref{QQ}). In order to find the values $v^{ij}_s=v(1,1,s)$, we have to solve
the infinite set of equations
\begin{equation}\label{n11}
\begin{aligned}[b]
 &Q^{(i)}(v^{ij}_{s+m},\dots,v^{ij}_s,v^j_{s+m},\dots,v^j_s)=0,\\
 &Q^{(j)}(v^{ij}_{s+m},\dots,v^{ij}_s,v^i_{s+m},\dots,v^i_s)=0,\quad
  s\in\mathbb Z.
\end{aligned}
\end{equation}
Taking $s=m-1,\dots,0$, we get a system of $2m$ equations for $2m$ unknowns
$v^{ij}_{2m-1},\dots,v^{ij}$. We will assume that functions $Q^{(i)},Q^{(j)}$
are generic in the sense that this system is not degenerate and possesses a
finite set of solutions. However, this does not guarantee that a solution
exists for the whole set of equations (\ref{n11}). Indeed, if we consider
additionally $s=m$ then two new equations for a single new unknown
$v^{ij}_{2m}$ are added, which, generally, do not admit a solution in common.
We are interested in the special type of equations such that a solution of
system (\ref{n11}) exists for the generic initial data.

Notice that expressions for $v^{ij}_{m-1},\dots,v^{ij}$ as functions of initial
data can contain only
\[
 v_{m-1},\dots,v,\quad v^i_{m-1},\dots,v^i,\quad v^j_{m-1},\dots,v^j.
\]
Indeed, since $v^{ij}_{m-1},\dots,v^{ij}$ are found by solving equations
(\ref{n11}) at $s=m-1,\dots,0$, hence these variables do not depend on $v_s$ at
$s<0$. Similarly, these variables can be found by solving equations (\ref{n11})
at $s=-1,\dots,-m$, and this implies that there are no dependence on $v_s$ at
$s>m-1$, as well. Therefore, if equations (\ref{QQ}) are consistent then a
mapping
\[
 R^{(ij)}:~(V^i,V^j,V)\to V^{ij},\quad V=(v_{m-1},\dots,v)
\]
appears, which can be interpreted as a $m$-component quad-equation on the
sublattice $(n_i,n_j)$. We derived it in the origin of the lattice, but it is
clear that this mapping is defined at any point $\nn$.

\begin{figure}[t]
\centerline{\includegraphics[width=\textwidth]{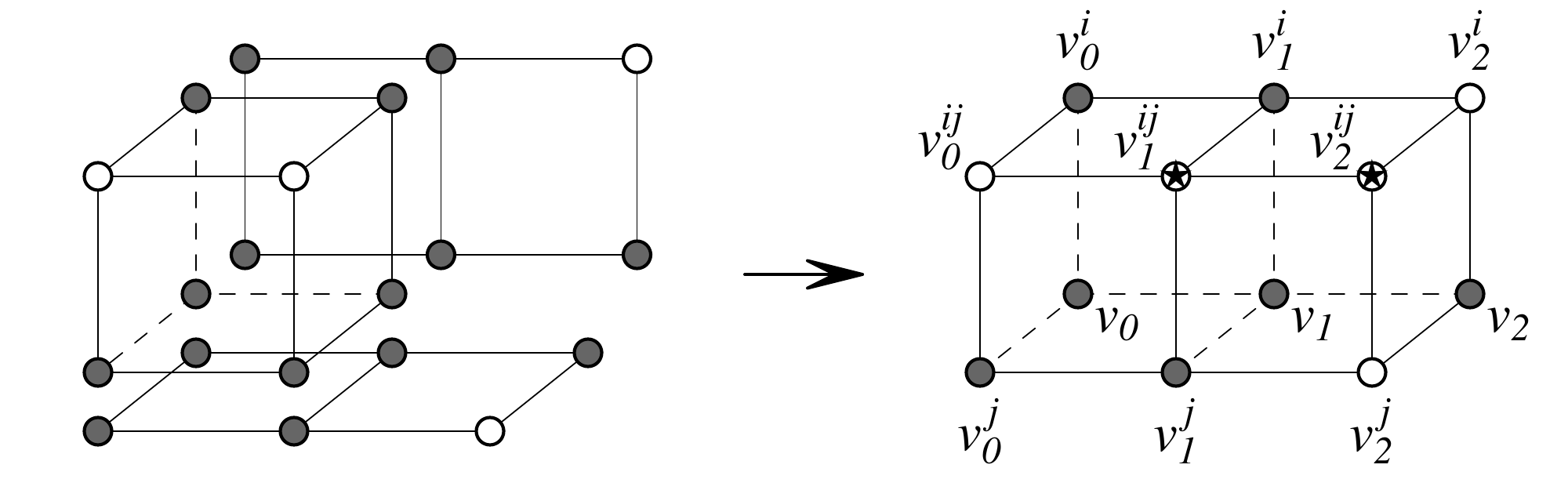}}
\captionsetup{width=0.75\textwidth}
\caption{3D-consistency for a pair of 2-quad-equations and a 2-component
quad-equation. Dark disks correspond to the initial data; the stars mark the
points where the consistency conditions are tested.}
\label{fig:3Dcons}
\end{figure}

This observation brings to the desired formulation of the consistency
conditions. In the following Definition, equations (\ref{QQ}) are taken in the
resolved form (\ref{q}) (so that the solution of the 3D-consistent system is
constructed in the octant $n_i\ge0$, $n_j\ge0$, $n\ge0$). The components of the
mapping $R^{(ij)}$ are enumerated by a subscript put in the square brackets in
order to distinguish it from the shift with respect to $n$. The consistency
condition on the 3-dimensional lattice is given in part (i) of the definition
(fig.~\ref{fig:3Dcons} illustrates it for $m=2$). In the multidimensional case,
this condition must be satisfied on the sublattices $(n_i,n_j,n)$ for all
$i,j$. The mappings $R^{(ij)}$ defined on all sublattices $(n_i,n_j)$ must
satisfy the usual 3D-consistency condition for quad-equations on the
sublattices $(n_i,n_j,n_k)$, as stated in the part (ii).

\begin{definition}\label{def:cons}
Let $V=(v_{m-1},\dots,v)$, $R^{(ij)}=(r^{(ij)}_{[m-1]},\dots,r^{(ij)}_{[0]})$.
The system of equations
\begin{gather}
\label{vim}
 v^i_m=q^{(i)}(v^i_{m-1},\dots,v^i,v_m,\dots,v)=q^{(i)}(V^i,v_m,V),\\
\label{vij}
 V^{ij}=R^{(ij)}(V^i,V^j,V),\quad i\ne j
\end{gather}
is called multidimensionally consistent if the following properties are
fulfilled:

(i) for any pair $i\ne j$, the relations
\begin{equation}\label{consij}
\begin{gathered}
 r^{(ij)}_{[s]}(V^i_1,V^j_1,V_1)
  =r^{(ij)}_{[s+1]}(V^i,V^j,V),\quad s=0,\dots,m-2,\\
 r^{(ij)}_{[m-1]}(V^i_1,V^j_1,V_1)
  =q^{(i)}(V^{ij},v^j_m,V^j)
  =q^{(j)}(V^{ij},v^i_m,V^i)
\end{gathered}
\end{equation}
where $v^i_m$, $v^j_m$, $V^{ij}$ are substituted from (\ref{vim}), (\ref{vij}),
hold identically with respect to $v_m,V$, $V^i$, $V^j$;

(ii) for any triple $i\ne j\ne k\ne i$, the relations
\begin{equation}\label{consijk}
  R^{(jk)}(V^{ij},V^{ik},V^i) =R^{(ik)}(V^{ij},V^{jk},V^j)
 =R^{(ij)}(V^{ik},V^{jk},V^k)
\end{equation}
where $V^{ij},V^{ik},V^{jk}$ are substituted from (\ref{vij}), hold identically
with respect to $V,V^i,V^j,V^k$.
\hfill\qedsymbol
\end{definition}

Less formally, the identities (\ref{consij}), (\ref{consijk}) can be
represented as
\begin{gather*}
 v^{ij}_1=T(r^{(ij)}_{[0]})=r^{(ij)}_{[1]},\quad\dots,\quad
 v^{ij}_{m-1}=T(r^{(ij)}_{[m-2]})=r^{(ij)}_{[m-1]},\\
 v^{ij}_m=T(r^{(ij)}_{[m-1]})=T_j(q^{(i)})=T_i(q^{(j)}),\\
 V^{ijk}=T_i(R^{(jk)})=T_j(R^{(ik)})=T_k(R^{(ij)})
\end{gather*}
where it is assumed that the shift operators $T,T_i,T_j$ act in virtue of
equations (\ref{vim}), (\ref{vij}).

One can prove that these conditions are not only necessary, but also
sufficient, on a lattice of arbitrary dimension, for the existence of common
solution of equations (\ref{vim}), (\ref{vij}) satisfying the generic initial
data defined on all 2-dimensional coordinate sublattices
$(\dots,0,n_i,0,\dots,0,n)$.

Two examples of multidimensionally consistent systems are presented in the rest
sections.

\section{B\"acklund transformation for the Bogoyavlensky lattice}\label{s:B}

There are several discrete equations of type (\ref{Qm}) related to the
Bogoyavlensky lattice
\begin{equation}\label{B.ut}
 \partial_tu=u(u_m+\dots+u_1-u_{-1}-\dots-u_{-m})
\end{equation}
and its modified versions. We restrict ourselves by consideration of just one
such equation which defines the B\"acklund transformation for the lattice
(\ref{B.ut}) in a potential form. The goal of this section is to prove the
property of multidimensional consistency, as formulated in the following
theorem, and to demonstrate that it is equivalent to the permutability property
of the B\"acklund transformations.

\begin{theorem}\label{th:B}
The system consisting of $m$-quad-equations
\begin{equation}\label{B.vi}
 (v_m-v^i_m)\cdots(v-v^i)=\beta^{(i)}v_{m-1}\cdots vv^i
\end{equation}
and of $m$-component quad-equations
\begin{gather}
\label{B.vij1}
 v^{ij}=\frac{\beta^{(j)}P^{(i)}v^j-\beta^{(i)}P^{(j)}v^i}
    {\beta^{(j)}P^{(i)}-\beta^{(i)}P^{(j)}},\quad
    P^{(i)}:=(v_{m-1}-v^i_{m-1})\dots(v-v^i),\\
\label{B.vijs}
  \frac{v^{ij}_s-v^j_s}{v^j_{s-1}}
 +\frac{v^j_s-v^i_s}{v_{s-1}}
 +\frac{v^i_s-v^{ij}_s}{v^i_{s-1}}=0,\quad s=1,\dots,m-1
\end{gather}
is multidimensionally consistent in the sense of Definition \ref{def:cons}.
Also, it is consistent with the lattice equation
\begin{equation}\label{B.vt}
 \partial_tv=v\left(\frac{v_m}{v}+\frac{v_{m-1}}{v_{-1}}
  +\cdots+\frac{v}{v_{-m}}\right)
\end{equation}
related with (\ref{B.ut}) by the substitution $u=v_m/v$.
\end{theorem}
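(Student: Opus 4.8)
The plan is to derive every identity required by Definition \ref{def:cons} from a single matrix (Lax--Darboux) representation, so that the forbiddingly long relations (\ref{consij}) and (\ref{consijk}) reduce to associativity of matrix multiplication rather than to brute-force substitution. First I would recover the linear problem behind (\ref{B.ut}): a scalar difference equation in the distinguished variable $n$, rewritten as a first-order matrix system
\[
 \Psi_1=\mathcal A\,\Psi,\qquad \mathcal A=\mathcal A(v_m,\dots,v;\lambda),
\]
the size of $\mathcal A$ being set by the order of the spectral problem. The B\"acklund transformation in the direction $n_i$ is realised as a Darboux (gauge) transformation $\Psi^i=\mathcal M^{(i)}\Psi$, where $\mathcal M^{(i)}=\mathcal M(v,v^i;\beta^{(i)})$ is attached to the edge $(\nn,T_i(\nn))$. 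The first task is to write $\mathcal M^{(i)}$ explicitly and to check that its compatibility with the $n$-shift,
\[
 T(\mathcal M^{(i)})\,\mathcal A=\mathcal A^i\,\mathcal M^{(i)},\qquad \mathcal A^i=T_i(\mathcal A),
\]
is equivalent to the $m$-quad-equation (\ref{B.vi}). This is the standard route by which an equation of type (\ref{Qm}) arises from a Darboux transformation, and it simultaneously produces the resolved form (\ref{q}) and the nondegeneracy required in Definition \ref{def:cons}.

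The central step is the superposition principle. Dressing in the two orders $i,j$ and $j,i$ must yield the same transformed linear problem, i.e.
\[
 T_i(\mathcal M^{(j)})\,\mathcal M^{(i)}=T_j(\mathcal M^{(i)})\,\mathcal M^{(j)},
\]
where $T_i(\mathcal M^{(j)})$ is built from $v^i,v^{ij}$ and the parameter $\beta^{(j)}$, and symmetrically on the right. I would solve this matrix equation for the superposed field $V^{ij}$. Since each factor $\mathcal M$ is a one-parameter perturbation of a simple matrix, I expect the equation to split into a single ``corner'' relation, which should collapse to the rational formula (\ref{B.vij1}) with the products $P^{(i)}$ arising as the relevant minors of $\mathcal M^{(i)}$, together with a tower of $m-1$ scalar relations reproducing (\ref{B.vijs}). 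This determines the map $R^{(ij)}$ of (\ref{vij}).

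With the matrix picture in hand, the consistency conditions follow almost for free. Part (i) of Definition \ref{def:cons} is precisely the statement that the once-superposed transition matrix $T_i(\mathcal M^{(j)})\mathcal M^{(i)}$ is again compatible with $\mathcal A$; concretely I would check that the two expressions for $v^{ij}_m$, obtained as $T_j(q^{(i)})$ and as $T_i(q^{(j)})$, coincide and that the tower (\ref{B.vijs}) is stable under the shift $T$. Part (ii) is the ordinary 3D-consistency of $R^{(ij)}$, which I would deduce from the independence of the triple product $\mathcal M^{(k)}\mathcal M^{(j)}\mathcal M^{(i)}$ (with the appropriate shifts) on the order of composition: all orderings produce one and the same dressing matrix, whence $V^{ijk}$ is well defined and (\ref{consijk}) holds. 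Finally, consistency with the flow (\ref{B.vt}) amounts to exhibiting the time part $\partial_t\Psi=\mathcal B\Psi$ of the Lax pair and verifying
\[
 \partial_t\mathcal M^{(i)}+\mathcal M^{(i)}\mathcal B=\mathcal B^i\mathcal M^{(i)};
\]
equivalently, differentiating (\ref{B.vi}) in $t$ and substituting (\ref{B.vt}) should return an identity, so that the transformation maps solutions to solutions.

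I expect the main obstacle to be the superposition step: proving that the matrix permutability relation is solvable and that its unique solution is exactly (\ref{B.vij1})--(\ref{B.vijs}). This rests on a careful analysis of the rank and kernel of $\mathcal M^{(i)}$ at the distinguished value $\beta^{(i)}$, since it is this degeneracy that lets an a priori overdetermined matrix identity collapse to the single rational corner formula plus the $m-1$ linear relations.
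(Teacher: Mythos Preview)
Your plan is essentially the same as the paper's: derive the Darboux matrix for the spectral problem behind (\ref{B.ut}), identify (\ref{B.vi}) as the compatibility with the $n$-shift, obtain (\ref{B.vij1})--(\ref{B.vijs}) from the refactorization identity, and read off parts (i)--(ii) of Definition~\ref{def:cons} from compatibility and from the uniqueness of the triple factorization (your kernel argument is exactly the one used). The one organizational difference is that the paper does not write the Darboux matrix directly in the vertex variable~$v$; instead it works throughout with an edge variable $f^{(i)}=\phi_{-1}/\phi$ (the logarithmic derivative of a particular solution at $\lambda=\alpha^{(i)}$), so that the Darboux matrix is a clean bidiagonal matrix in $f_1,\dots,f_m$, proves the analogues of (\ref{B.vi}) and (\ref{B.vij1})--(\ref{B.vijs}) for~$f$, and only at the very end introduces the potential $v$ via the conservation law $u^i/u=[(f^{(i)}_m-\alpha^{(i)})/f^{(i)}_m]\big/[(f-\alpha)/f]$, setting $u=v_m/v$, $f^{(i)}=\alpha^{(i)}v/(v-v^i)$. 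This intermediate variable buys simpler matrices and a transparent kernel computation (the kernel at $\lambda=\alpha$ is $(1,f_m,f_mf_{m-1},\dots,f_m\cdots f_1)^\top$), at the cost of an extra change-of-variables step; your direct-in-$v$ route avoids that step but will force you to carry the products $(v_s-v^i_s)$ through the refactorization algebra from the start.
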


In particular, at $m=1$ (the Volterra lattice case), equation (\ref{B.vi})
takes the form
\[
 (v_1-v^i_1)(v-v^i)=\beta^{(i)}vv^i
\]
and system (\ref{B.vij1}), (\ref{B.vijs}) is reduced to the quad-equation
$Q^0_1$
\[
 \beta^{(i)}(v-v^j)(v^i-v^{ij})=\beta^{(j)}(v-v^i)(v^j-v^{ij}).
\]
In the general case, equation (\ref{B.vijs}) without the restrictions on the
values of $s$ is a well-known 3-dimensional integrable equation equivalent to
the Hirota equation. The multidimensional consistency property for equations of
this type was studied in \cite{ABS12}. Equation (\ref{B.vij1}) is a constraint
which defines a reduction of this 3-dimensional equation to the $m$-component
2-dimensional one and equations (\ref{B.vi}) play the role of additional
constraints. One proof of the theorem can be obtained by verifying the
compatibility of these constraints with the 3-dimensional equation.

We will take another way which is completely within the 2-dimensional theory,
starting from the Darboux transformation for the discrete linear problem
\begin{equation}\label{B.psi}
 \psi_{-1}-u\psi_m=\lambda\psi
\end{equation}
associated with the Bogoyavlensky lattice.

\begin{statement}\label{st:B.Darboux}
Let the function $\psi(n)$ be a solution of equation (\ref{B.psi}) where
\begin{equation}\label{B.uf}
 u=f_m\cdots f_1(f-\alpha)
\end{equation}
and $u(n)\ne0$ for all $n$, then the function
\begin{equation}\label{B.Darboux}
 \tilde\psi=\frac{1}{f-\alpha}\Bigl(\frac{\alpha}{\lambda}\psi_{-1}-f\psi\Bigr)
\end{equation}
is a solution of equation
$\tilde\psi_{-1}-\tilde{u}\tilde\psi_m=\lambda\tilde\psi$ where
\begin{equation}\label{B.u1f}
 \tilde u=(f_m-\alpha)f_{m-1}\cdots f.
\end{equation}
\end{statement}
\begin{proof}
It is easy to prove that if $u,\tilde u$ are of the prescribed form then the
operators
\begin{gather*}
 L=T^{-1}-uT^m,\quad \tilde L=T^{-1}-\tilde uT^m,\quad
 A=\frac{1}{f-\alpha}\Bigl(\frac{\alpha}{\lambda}T^{-1}-f\Bigr),\\
 B=\frac{\lambda}{\alpha}
  +\frac{\alpha^{m-1}}{\lambda^{m-1}}(f-\alpha)T(1-Y^m)(1-Y)^{-1}
\end{gather*}
where
\[
 Y=\frac{\lambda}{\alpha}fT,\quad
 \mu=\frac{\alpha^m}{\lambda^m}-\frac{\lambda}{\alpha}
\]
satisfy the identities
\begin{gather*}
 B\cdot(f-\alpha)A=L-\lambda+\mu(f-\alpha),\\
 (f-\alpha)A\frac{1}{f-\alpha}B\cdot(f-\alpha)
 =\tilde L-\lambda+\mu(f-\alpha).
\end{gather*}
Here, we assume that $\alpha\ne0$ since otherwise the statement becomes
trivial. In order to make the computation, it is convenient to represent the
operator $A$ as $A=\dfrac{\alpha}{\lambda(f-\alpha)}(1-Y)T^{-1}$.

The first identity implies that any solution of equation $L\psi=\lambda\psi$
satisfies as well the equation
\[
 B\cdot(f-\alpha)A\psi=\mu(f-\alpha)\psi.
\]
Hence, $\tilde\psi=A\psi$ satisfies the equation
\[
 A\frac{1}{f-\alpha}B\cdot(f-\alpha)\tilde\psi=\mu\tilde\psi
\]
and the second identity implies $\tilde L\psi=\lambda\tilde\psi$.
\end{proof}

\begin{remark}
The condition $u(n)\ne0$ means, apparently, that $f(n)\ne0$, $f(n)\ne\alpha$,
for all $n$. In principle, this restriction may be waived, but it does not lead
to an essential generalization, rather it requires a lot of stipulations. So,
we will assume that this condition is fulfilled in what follows.
\end{remark}

Relations (\ref{B.uf}), (\ref{B.u1f}) play the role of the Miura type
transformations for the Bogoyavlensky lattice. Given the function $u$, any
function $f$ satisfying the equation $u=f_m\cdots f_1(f-\alpha)$ can be
represented as $f=\phi_{-1}/\phi$ where $\phi$ is a particular solution of
equation (\ref{B.psi}) at $\lambda=\alpha$. The lattice (\ref{B.ut}) is
equivalent to the compatibility condition of equation (\ref{B.psi}) and
\[
 \partial_t\psi=\psi_{-m-1}-(u+\cdots+u_{-m})\psi
\]
and this allows to obtain easily the modified lattice equation for the variable
$f$:
\begin{equation}\label{B.ft}
 \partial_tf=f(f-\alpha)(f_m\cdots f_1-f_{-1}\cdots f_{-m}).
\end{equation}
Vice versa, a direct computation proves that each of the substitutions
(\ref{B.uf}), (\ref{B.u1f}) maps solutions of equation (\ref{B.ft}) into
solutions of equation (\ref{B.ut}). The elimination of the variable $u$ brings
to the following statement.

\begin{statement}\label{st:B.ft}
Equation
\begin{equation}\label{B.f}
 (f_m-\alpha)f_{m-1}\cdots f=\tilde f_m\cdots\tilde f_1(\tilde f-\tilde\alpha)
\end{equation}
defines the $n$-part of the B\"acklund transformation for equation
(\ref{B.ft}), that is, differentiating of (\ref{B.f}) in virtue of (\ref{B.ft})
and an analogous lattice equation for $\tilde f,\tilde\alpha$ yields a relation
which holds identically in virtue of equation (\ref{B.f}) itself.
\end{statement}

In other words, the lattice equation (\ref{B.ft}) defines a higher symmetry for
(\ref{B.f}). The existence of just one such symmetry is a nontrivial fact which
allows to say about the integrability of the equation. But, what is about the
3D-consistency? As we have already mentioned in the introduction, although
(\ref{B.f}) belongs to the class of equations (\ref{Qm}) under consideration,
but the variables $f$ are naturally associated with the edges of the lattice
rather than the vertices, and the Definition from the previous section is not
suitable. In this situation, the 3D-consistency should be understood in the
sense of the Yang--Baxter maps. We do not give one more general definition,
since the variables $f$ play an auxiliary role in our consideration; the
required property is formulated in Statement \ref{st:B.NSP} below. In order to
prove it, we use the matrix representation of the linear problem (\ref{B.psi})
and its Darboux transformation (\ref{B.Darboux})
\begin{equation}\label{B.Psi}
 \Psi_{-1}=U\Psi,\quad \widetilde\Psi=F\Psi \quad\Rightarrow\quad
 \widetilde UF=F_{-1}U
\end{equation}
where $\Psi$ is a $(m+1)$-dimensional vector and $U,F$ are $(m+1)\times(m+1)$
matrices:
\begin{equation}\label{B.matU}
 \Psi=\begin{pmatrix}\psi_m\\ \vdots\\ \psi\end{pmatrix},\quad
 U=\begin{pmatrix}
  0 & 1 & \dots  & 0\\
    &   & \ddots &  \\
  0 & 0 & \dots  & 1\\
  u & 0 & \dots  & \lambda
  \end{pmatrix},
\end{equation}
\begin{equation}\label{B.matF}
 F=\begin{pmatrix}
  -\frac{f_m}{f_m-\alpha} &
   \frac{\alpha}{\lambda(f_m-\alpha)} & 0 &\dots & 0 \\
  0 & -\frac{f_{m-1}}{f_{m-1}-\alpha} &
       \frac{\alpha}{\lambda(f_{m-1}-\alpha)} &\dots & 0 \\
    &   &\ddots &\ddots  & \\
  0 & 0 &\dots  & -\frac{f_1}{f_1-\alpha} &
                   \frac{\alpha}{\lambda(f_1-\alpha)}\\
  \frac{\alpha}{\lambda}f_m\cdots f_1 & 0 &\dots  & 0 & -1
  \end{pmatrix}.
\end{equation}
These matrices are easily derived from equations (\ref{B.psi}),
(\ref{B.Darboux}). Statement \ref{st:B.Darboux} means exactly that relations
(\ref{B.uf})--(\ref{B.u1f}) are equivalent to the matrix equations
(\ref{B.Psi}). The permutability property of two Darboux transformations of
this form
\[
 \widetilde\Psi=F[f,\alpha,\lambda]\Psi,\quad
 \widehat\Psi=F[g,\beta,\lambda]\Psi
\]
is expressed by equation
\begin{equation}\label{B.FF}
  F[\tilde g,\beta,\lambda]\,F[f,\alpha,\lambda]
 =F[\hat f,\alpha,\lambda]\,F[g,\beta,\lambda].
\end{equation}
Due to the special structure of the matrices this equation can be easily solved
and a direct computation brings to the following statement. The proof scheme of
the identity (\ref{B.TTF}) based on the refactorization of the product of
matrix triple is well known (see e.g. \cite{AY}).

\begin{statement}\label{st:B.NSP}
Let matrices $F$ be of the form (\ref{B.matF}) and $\alpha\ne\beta$ then
equation (\ref{B.FF}), viewed as an identity with respect to $\lambda$, is
uniquely solved with respect to $\hat f,\tilde g$ for the generic values of
$f,g$. Denote $f=f^{(i)}$, $\alpha=\alpha^{(i)}$, $g=f^{(j)}$,
$\beta=\alpha^{(i)}$, then the solution is written as the $m$-component
mapping
\begin{equation}\label{B.Tf}
\begin{aligned}
 & T_j(f^{(i)}_1)=\frac{(\alpha^{(j)}f^{(j)}_m\cdots f^{(j)}_1
     -\alpha^{(i)}f^{(i)}_m\cdots f^{(i)}_1)(f^{(j)}_1-\alpha^{(j)})}
    {f^{(i)}_m\cdots f^{(i)}_2(\alpha^{(j)}f^{(i)}_1-\alpha^{(i)}f^{(j)}_1)},\\
 & T_j(f^{(i)}_s)=\frac{f^{(i)}_s(f^{(j)}_s-\alpha^{(j)})
    (\alpha^{(j)}f^{(i)}_{s-1}-\alpha^{(i)}f^{(j)}_{s-1})}
   {(f^{(j)}_{s-1}-\alpha^{(j)})(\alpha^{(j)}f^{(i)}_s-\alpha^{(i)}f^{(j)}_s)},
  \quad s=2,\dots,m.
\end{aligned}
\end{equation}
This mapping satisfies the 3D-consistency property
\begin{equation}\label{B.TTF}
 T_kT_j(f^{(i)}_s)=T_jT_k(f^{(i)}_s),\quad s=1,\dots,m.
\end{equation}
\end{statement}
\begin{proof}
One can prove that equation (\ref{B.FF}) is uniquely solvable also with respect
to the variables $\hat f,g$, for given $\tilde g,f$. Moreover, the following
property is fulfilled: if parameters $\alpha,\beta,\gamma$ are distinct and
\begin{equation}\label{B.P}
 F[h,\gamma,\lambda]F[g,\beta,\lambda]F[f,\alpha,\lambda]
 =F[h',\gamma,\lambda]F[g',\beta,\lambda]F[f',\alpha,\lambda]
\end{equation}
then $h'=h$, $g'=g$, $f'=f$. This follows from the fact that the matrix
(\ref{B.matF}) is uniquely defined by its one-dimensional kernel at the
spectral parameter value $\lambda=\alpha$:
\[
 \ker F[f,\alpha,\alpha]=(1,f_m,f_mf_{m-1},\dots,f_m\cdots f_1)^\top.
\]
Now, let us assign the parameters $\alpha,\beta,\gamma$ to three coordinate
directions and consider refactorizations of the matrix
$F[h,\gamma,\lambda]F[g,\beta,\lambda]F[f,\alpha,\lambda]$ corresponding to all
permutations of these parameters. Due to the property (\ref{B.P}), each
permutation corresponds to a unique set of values $f,g,h$ which are naturally
associated with edges of a cube. Moreover, the value at each edge is obtained
by two different sequences of pairwise refactorizations which is equivalent to
the property (\ref{B.TTF}).
\end{proof}

\begin{proof}[Proof of the Theorem \ref{th:B}]
The above statements imply the consistency, in the lattice of arbitrary
dimension, of equations
\begin{equation}\label{B.uf'}
 u=f^{(i)}_m\cdots f^{(i)}_1(f^{(i)}-\alpha^{(i)}),\quad
 u^i=(f^{(i)}_m-\alpha^{(i)})f^{(i)}_{m-1}\cdots f^{(i)},
\end{equation}
the mapping (\ref{B.Tf}) and the differential-difference equations
(\ref{B.ut}), (\ref{B.ft}). The final step is to pass from the edge-type
variables $f^{(i)}$ to the variables $v$ associated with the vertices of the
lattice. To do this, notice that (\ref{B.uf'}) implies the multiplicative
conservation law
\begin{equation}\label{B.uuff}
 \frac{u^i}{u}=\frac{(f^{(i)}_m-\alpha^{(i)})/f^{(i)}_m}{(f-\alpha)/f}
\end{equation}
and this allows to introduce (uniquely up to a constant factor) the potential
variable $v$, according to the equations
\[
 u=\frac{v_m}{v},\quad f^{(i)}=\frac{\alpha^{(i)}v}{v-v^i}.
\]
Under this change, equations (\ref{B.ut}), (\ref{B.ft}) turn into equation
(\ref{B.vt}), relations (\ref{B.uf'}) turn into the $m$-quad-equation
(\ref{B.vi}) and the mapping (\ref{B.Tf}) turn into the $m$-component
quad-equation (\ref{B.vij1}), (\ref{B.vijs}). The parameter $\alpha$ enters
into new equations only in power of $m+1$, so we denote additionally
$(\alpha^{(i)})^{m+1}=\beta^{(i)}$.
\end{proof}

\begin{remark}
The potential can be introduced in different ways, for instance, instead of
(\ref{B.uuff}) one can consider the conservation law
\[
 \frac{u^i_1}{u}=\frac{f^{(i)}_{m+1}-\alpha^{(i)}}{f-\alpha}
\]
which leads to the substitution
\[
 u=\frac{w_{m+1}}{w},\quad f^{(i)}=\frac{w^i_1}{w}+\alpha^{(i)}.
\]
The variable $w$ satisfies the equation
\[
 w_m\cdots w=(w^i_m+\alpha^{(i)}w_{m-1})\cdots(w^i_1+\alpha^{(i)}w)w^i
\]
with the properties similar to the properties of equation (\ref{B.vi}).
\end{remark}

\begin{remark}
From the point of view of logical simplicity, introducing of the potential
seems to be a redundant step, because we already have the variable defined in
the vertices of the lattice---this is $u$, the coefficient of the original
linear problem (\ref{B.psi}). This variable satisfies some multiquad-equation,
indeed, but the problem is that it is too complicated and it is hardly possible
to write it down in general form for all $m$. In contrast to equations for $f$,
$v$ or $w$, this equation is not affine-linear with respect to the involved
values. For instance, in the simplest case $m=1$,  the relations (\ref{B.uf})
take the form
\[
 u=f_1(f-\alpha),\quad \tilde u=(f_1-\alpha)f;
\]
from here one finds
\[
 f=\frac{1}{2\alpha}(u-\tilde u+\alpha^2
  +\sqrt{(\tilde u-u)^2+2\alpha^2(\tilde u+u)+\alpha^4}),\quad
 f_1=f+\frac{1}{\alpha}(\tilde u-u)
\]
and elimination of $f$ brings to the quad-equation
\begin{multline*}\quad
 \tilde u-u+\sqrt{(\tilde u-u)^2+2\alpha^2(\tilde u+u)+\alpha^4})\\
  = u_1-\tilde u_1
  +\sqrt{(\tilde u_1-u_1)^2+2\alpha^2(\tilde u_1+u_1)+\alpha^4}). \quad
\end{multline*}
It can be brought to a polynomial form quadratic with respect to each variable;
the general theory of such equations was developed in \cite{AN}. Analogously,
at $m=2$, a polynomial equation cubic with respect to each
variable appears after elimination of $f,\dots,f_4$ from equations
\begin{gather*}
 u_s=f_{s+2}f_{s+1}(f_s-\alpha),\quad \tilde u_s=(f_{s+2}-\alpha)f_{s+1}f_s,
 \quad s=0,1,2.
\end{gather*}
\end{remark}

\section{B\"acklund transformation for the discretization of Sawada--Kotera
equation}\label{s:dSK}

This section is devoted to the discrete equations related to the lattice
equation
\begin{equation}\label{dSK.ut}
 \partial_tu=u^2(u_m\cdots u_1-u_{-1}\cdots u_{-m})
  -u(u_{m-1}\cdots u_1-u_{-1}\cdots u_{1-m}).
\end{equation}
The main result is the following Theorem, which presents the B\"acklund
transformation and nonlinear superposition formula for the potential form of
(\ref{dSK.ut}).

\begin{theorem}\label{th:dSK}
The following system of equations is multidimensionally consistent:
\begin{gather}
\label{dSK.vi}
 (v^i_m-v)v^i_{m-1}\cdots v^i=\alpha^{(i)}(v_m-v^i)v_{m-1}\cdots v,\\
\label{dSK.vij}
 v^{ij}_s=v_s\frac{P^{(i,j)}_{s+1}}{P^{(i,j)}_s},\quad s=0,\dots,m-1
\end{gather}
where
\begin{gather*}
\begin{aligned}
 P^{(i,j)}_s &= (\alpha^{(i)}v^j_{m-1}\cdots v^j
    -\alpha^{(j)}v^i_{m-1}\cdots v^i)v_{m-1}\cdots v\\
 &\quad
  +\sum^{m-1}_{r=0}\gamma^{(i,j)}_{r,s}(v_{m-1}\cdots v_{r+1})^2
   v_r(v^i_r-v^j_r)v^i_{r-1}\cdots v^iv^j_{r-1}\cdots v^j,
\end{aligned}\\
 \gamma^{(i,j)}_{r,s}=
  \begin{cases}
   1 & s\le r,\\
   \alpha^{(i)}\alpha^{(j)} & r<s.
  \end{cases}
\end{gather*}
It is also consistent with the lattice equation
\begin{equation}\label{dSK.vt}
 \partial_tv=\frac{v_{m-1}\cdots v}{v_{-1}\cdots v_{-m}}(v_m-v_{-m})
\end{equation}
related with (\ref{dSK.ut}) by the substitution $u=v_m/v$.
\end{theorem}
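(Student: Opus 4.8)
The plan is to mirror the strategy already used for Theorem~\ref{th:B}, namely to avoid verifying Definition~\ref{def:cons} directly and instead to establish consistency at the level of the underlying Darboux transformation and its matrix representation. Thus I would first seek a discrete linear (spectral) problem of order $m+1$ whose compatibility with an appropriate time evolution reproduces the lattice equation (\ref{dSK.ut}). Since the introduction announces that the Lax operator $L$ is a \emph{ratio} of two difference operators, I expect the spectral problem to have the form $L_1\psi = \lambda L_2\psi$ with $L_1,L_2$ first-order in the shift $T$ (involving the $u_s$), so that $L = L_2^{-1}L_1$. I would record this problem together with its $\partial_t$-flow and check, as in Statement~\ref{st:B.ft}, that eliminating $u$ between the two Miura-type substitutions produces the $n$-part of the B\"acklund transformation and that (\ref{dSK.ut}) is a higher symmetry of it.

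Next I would construct the Darboux transformation $\tilde\psi = A\psi$ explicitly, analogous to Statement~\ref{st:B.Darboux}, by exhibiting operators $A,B$ and a constant $\mu$ satisfying intertwining identities of the type $B\,A = L-\lambda+\mu(\cdots)$, so that $A$ maps eigenfunctions of $L$ to eigenfunctions of $\tilde L$. I would then pass to the $(m+1)\times(m+1)$ matrix form $\Psi_{-1}=U\Psi$, $\widetilde\Psi=F\Psi$, with the Darboux matrix $F=F[f,\alpha,\lambda]$ read off from the explicit $A$. The permutability of two such transformations becomes the refactorization identity $F[\tilde g,\beta,\lambda]F[f,\alpha,\lambda]=F[\hat f,\alpha,\lambda]F[g,\beta,\lambda]$, and 3D-consistency follows, exactly as in the proof of Statement~\ref{st:B.NSP}, from the fact that $F[f,\alpha,\alpha]$ is determined by its one-dimensional kernel; the cube refactorization argument then yields $T_kT_j=T_jT_k$ for the edge variables.

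The final step is the passage from the edge variables $f^{(i)}$ to the vertex potential $v$. I would look for a multiplicative conservation law relating $u^i/u$ to a ratio of expressions in $f^{(i)}$, introduce the potential by $u=v_m/v$ together with a suitable rational substitution $f^{(i)}=f^{(i)}(v,v^i,\dots)$, and then verify by direct substitution that the two Miura relations collapse into the single $m$-quad-equation (\ref{dSK.vi}), that the lattice flow becomes (\ref{dSK.vt}), and that the edge mapping becomes precisely the $m$-component quad-equation (\ref{dSK.vij}) with the stated polynomials $P^{(i,j)}_s$ and coefficients $\gamma^{(i,j)}_{r,s}$. Consistency in the sense of Definition~\ref{def:cons} is then inherited from the already-established consistency of the matrix system.

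The main obstacle, I expect, will be in the first and last steps rather than in the middle. Because the Sawada--Kotera discretization (\ref{dSK.ut}) has a more intricate, quadratic-in-$u$ right-hand side and a Lax operator that is a \emph{ratio} of difference operators, guessing the correct factorized spectral problem and the correct Darboux matrix $F$ is considerably harder than in the Bogoyavlensky case, and getting the intertwining identities to close will require care with the constant $\mu$ and the auxiliary operator $B$. The second delicate point is identifying the precise potential substitution that produces the rather elaborate formula (\ref{dSK.vij}): the piecewise coefficients $\gamma^{(i,j)}_{r,s}$ and the telescoping ratio $v_s P^{(i,j)}_{s+1}/P^{(i,j)}_s$ strongly suggest that the right form of $P^{(i,j)}_s$ must be reverse-engineered from the refactorization, so the algebra of matching the matrix solution to this closed form is where the real work lies; the 3D-consistency itself is then essentially automatic once the matrix picture is in place.
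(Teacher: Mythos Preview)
Your overall strategy---Darboux transformation, matrix refactorization, kernel argument for 3D-consistency, then potential---matches the paper's. But your expectation for the \emph{form} of the Darboux transformation is wrong, and this is where you would get stuck. You propose to find operators $A,B$ and a constant $\mu$ with intertwining identities of the type $B\cdot A=L-\lambda+\mu(\cdots)$, exactly as in Statement~\ref{st:B.Darboux}. The paper explicitly notes that this does not work here: the Darboux map on wavefunctions is not given by a polynomial difference operator but by a \emph{ratio},
\[
 \tilde\psi=(1-fT)^{-1}(T-f)\psi,
\]
and the way this is established is by showing that both $\chi=(T-f)\psi$ and $\chi=(1-fT)\tilde\psi$ satisfy one and the same auxiliary linear equation (the intermediate equation (\ref{dSK.chi})). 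Consequently the Darboux matrix is $F=B^{-1}A$ with \emph{two} explicit $(m+1)\times(m+1)$ matrices $A,B$ (equations (\ref{dSK.A}), (\ref{dSK.B})), not something read off from a single operator. If you try to force the Bogoyavlensky pattern you will not close the intertwining relations.

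A second, smaller miscalibration: you locate the ``real work'' in the potential step, expecting a delicate conservation-law argument to manufacture $v$. In fact the paper's potential is the simplest possible, $u=v_m/v$ and $f^{(i)}=v^i/v$, and under it the Miura pair collapses directly to (\ref{dSK.vi}) and the refactorization map to (\ref{dSK.vij}). The genuinely hard computations are (a) verifying the intermediate $\chi$-equation (Statement~\ref{st:dSK.Darboux}) and (b) solving the refactorization $F[\tilde g,\beta,\lambda]F[f,\alpha,\lambda]=F[\hat f,\alpha,\lambda]F[g,\beta,\lambda]$ with $F=B^{-1}A$ to extract the closed form $T_j(f^{(i)}_s)=P^{(i,j)}_{s+1}/(f^{(j)}_sP^{(i,j)}_s)$ (Statement~\ref{st:dSK.NSP}); the $P^{(i,j)}_s$ with their piecewise $\gamma^{(i,j)}_{r,s}$ arise there, not at the potential stage. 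Once you accept the ratio structure of the Darboux transformation, the kernel/refactorization argument for 3D-consistency carries over verbatim from Statement~\ref{st:B.NSP}, as you correctly anticipate.
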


Notice, that equation (\ref{dSK.ut}) at $m=1$ coincides with the modified
Volterra lattice $\partial_tu=u^2(u_1-u_{-1})$ and system (\ref{dSK.vij})
turns into equation
\[
 (\alpha^{(i)}-1)(v^{ij}v^j-\alpha^{(j)}v^iv)=
 (\alpha^{(j)}-1)(v^{ij}v^i-\alpha^{(i)}v^jv)
\]
which is equivalent to $H^0_3$ quad-equation. Equation (\ref{dSK.ut}) at $m>1$
was studied in \cite{AP}, where it was shown that its continuous limit
coincides with the Sawada--Kotera equation. This equation contains, as
particular cases, the modified Bogoyavlensky lattices
\[
 \partial_{t'}u=u^2(u_m\cdots u_1-u_{-1}\cdots u_{-m}),\quad
 \partial_{t''}u=u(u_{m-1}\cdots u_1-u_{-1}\cdots u_{1-m})
\]
which can be obtained by scaling $u,t$ and passing to the limit. However, these
flows do not commute and the integrability of their linear combination is a
nontrivial fact which follows from the existence of the Lax pair \cite{AP}
\begin{gather}
\label{dSK.psi}
 u\psi_{m+1}-\psi_m+\lambda(\psi_1-u\psi)=0,\\
\label{dSK.psit}
 \partial_t\psi=u_{-1}\cdots u_{-m}(\lambda\psi_{-m}-\lambda^{-1}\psi_m).
\end{gather}
It is easy to verify that equation (\ref{dSK.ut}) is equivalent to the
compatibility condition for these linear equations.

Derivation of the Darboux transformation for equation (\ref{dSK.psi}) is the
main step in the proof of Theorem \ref{th:dSK}. The action of this
transformation on function $\psi$ is defined in a more complicated way than in
the case of the Bogoyavlensky lattice, where operators of the first order were
used (see Statement \ref{st:B.Darboux}). Nevertheless, on the level of
coefficients of the linear problem the Darboux transformation is described, as
before, by the pair of Miura type substitutions,
\begin{equation}\label{dSK.uf}
 u=\frac{g}{G}f,\quad \tilde u=\frac{g}{G}f_m
\end{equation}
where
\[
 g=f_{m-1}\cdots f_1-\alpha,\quad G=f_m\cdots f-\alpha
\]
(this notation will be used throughout the section).

\begin{statement}\label{st:dSK.Darboux}
Let functions $u,\tilde u$ be of the form (\ref{dSK.uf}), $\psi$ be a solution
of equation (\ref{dSK.psi}) and $\tilde\psi$ be a solution of similar
equation
\begin{equation}\label{dSK.tpsi}
 \tilde u\tilde\psi_{m+1}-\tilde\psi_m
 +\lambda(\tilde\psi_1-\tilde u\tilde\psi)=0.
\end{equation}
Then both formulae $\chi=\psi_1-f\psi$ and $\chi=\tilde\psi-f\tilde\psi_1$
define a solution of equation
\begin{equation}\label{dSK.chi}
 (1-f_mf)(f_1g_1\chi_{m+1}+\lambda G_1\chi_1)=
 (1-f_{m+1}f_1)(G\chi_m+\lambda f_mg\chi).
\end{equation}
\end{statement}
\begin{proof}
Let us consider the first substitution:
\begin{align*}
 &\chi=\psi_1-f\psi,\\
 &\chi_1=\psi_2-f_1\psi_1,\\
 &\chi_m=\psi_{m+1}-f_m\psi_m
   =\lambda\psi-\frac{\lambda}{u}\psi_1+\Bigl(\frac{1}{u}-f_m\Bigr)\psi_m,\\
 &\chi_{m+1}=\lambda\psi_1-\frac{\lambda}{u_1}\psi_2
  +\Bigl(\frac{1}{u_1}-f_{m+1}\Bigr)(\chi_m+f_m\psi_m).
\end{align*}
Equation for $\chi$ is obtained by solving first three equations with respect
to $\psi_1,\psi_2,\psi_m$ and plugging the result into the last one. A
straightforward (but rather lengthy) computation proves that the terms with
$\psi$ cancel out (taking the relation $u=gf/G$ into account) and equation
(\ref{dSK.chi}) appears.

The second substitution $\chi=\tilde\psi-f\tilde\psi_1$ can be checked
analogously, but it is better to make use of the reflection
\[
 f(n)\to f(-n),\quad \psi(n)\leftrightarrow\tilde\psi(1-n),\quad
 \chi(n)\to\chi(-n),\quad \lambda\to\frac{1}{\lambda}.
\]
Under this transformation, equations (\ref{dSK.psi}) and (\ref{dSK.tpsi}) turn
into each other, as well as the substitutions $\chi=\psi_1-f\psi$ and
$\chi=\tilde\psi-f\tilde\psi_1$, and equation (\ref{dSK.chi}) does not change.
Indeed, equation (\ref{dSK.psi})
\[
 \frac{f_{m-1}\cdots f_1-\alpha}{f_m\cdots f-\alpha}f
 (\psi_{m+1}-\lambda\psi)=\psi_m-\lambda\psi_1
\]
turns into
\[
 \frac{f_{-m+1}\cdots f_{-1}-\alpha}{f_{-m}\cdots f-\alpha}f
 (\tilde\psi_{-m}-\frac{1}{\lambda}\tilde\psi_1)
 =\tilde\psi_{1-m}-\frac{1}{\lambda}\tilde\psi
\]
and applying of $-\lambda T^m$ results in equation (\ref{dSK.tpsi})
\[
 \frac{f_{m-1}\cdots f_1-\alpha}{f_m\cdots f-\alpha}f_m
 (\tilde\psi_{m+1}-\lambda\tilde\psi)
 =\tilde\psi_m-\lambda\tilde\psi_1.
\]
To verify the invariance of equation (\ref{dSK.chi}), it is sufficient to
notice that the reflection sends $g_k$ to $g_{-m-k}$ and $G_k$ to $G_{-m-k}$.
\end{proof}

Equation (\ref{dSK.chi}) looks awkward, but we need it only as an intermediate
between equations (\ref{dSK.psi}) and (\ref{dSK.tpsi}). It is important only
that {\em both} these equations admit first order Darboux transformations into
one and the same equation, so it is possible to define a Darboux transformation
between them:
\[
 \chi=(T-f)\psi,\quad \chi=(1-fT)\tilde\psi
 \quad\Rightarrow\quad \tilde\psi=(1-fT)^{-1}(T-f)\psi.
\]
The operator $1-fT$ is invertible in virtue of equation (\ref{dSK.psi}) and the
operator $(1-fT)^{-1}(T-f)$ is equivalent to some operator of $m$-th order (its
explicit form is rather bulky, but we do not need it). In the matrix
representation, the Darboux transformation is defined by equations
\[
 \Psi_1=U\Psi,\quad \tilde\Psi=F\Psi,\quad F=B^{-1}A
\]
where
\begin{gather}
\nonumber
 \Psi=\begin{pmatrix}\psi\\ \vdots\\ \psi_m\end{pmatrix},\quad
 U=\begin{pmatrix}
  0 & 1 & \dots  & 0 & 0\\
  0 & 0 & \ddots & 0 & 0\\
    &   & \ddots & \ddots & \\
  0 & 0 & \dots  & 0 & 1\\
  \lambda & -\dfrac{\lambda}{u} & \dots & 0 & \dfrac{1}{u}
  \end{pmatrix},\\
\label{dSK.A}
 A=\begin{pmatrix}
  -f &    1 & \dots  & 0 & 0\\
   0 & -f_1 & \ddots & 0 & 0\\
     &      & \ddots & \ddots   &  \\
   0 &    0 & \dots  & -f_{m-1} & 1\\
  \lambda & -\dfrac{\lambda G}{gf} & \dots & 0 & \dfrac{G}{gf}-f_m
  \end{pmatrix},\\
\label{dSK.B}
 B=\begin{pmatrix}
   1 & -f & \dots  & 0 & 0\\
   0 &  1 & \ddots & 0 & 0\\
     &    & \ddots & \ddots & \\
   0 &  0 & \dots  & 1 & -f_{m-1} \\
  -\lambda f_m & \lambda\dfrac{G}{g} & \dots & 0 & 1-\dfrac{G}{g}
 \end{pmatrix}.
\end{gather}
The consistency condition $\widetilde UF=F_1U$ is exactly equivalent to the
relations (\ref{dSK.uf}). Pay attention that although the variable $f_m$ enters
the matrices $A$ and $B$, the matrix $F$ does not depend on it.

A more complicated structure of the Darboux matrix $F$ is the only essential
difference from the example considered in the previous section. The general
scheme of the proof remains the same. Any function $f$ satisfying equation
$u=gf/G$ can be represented as $f=\phi_1/\phi$ where $\phi$ is a particular
solution of equation (\ref{dSK.psi}) at $\lambda=\alpha$. The use of equation
(\ref{dSK.psit}) brings to the modified lattice equation
\begin{equation}\label{dSK.ft}
 \partial_tf=f\frac{g_{-1}\cdots g_{-m+1}}{G\cdots G_{-m}}
  \bigl(gg_{-m}(G-G_{-m})-GG_{-m}(g-g_{-m})\bigr).
\end{equation}
Vice versa, one can verify intermediately that each substitution (\ref{dSK.uf})
maps a solution of equation (\ref{dSK.ft}) into a solution of (\ref{dSK.ut}).
This proves the following statement.

\begin{statement}\label{st:dSK.ft}
Equation
\begin{equation}\label{dSK.f}
 \frac{(\tilde f_{m-1}\cdots\tilde f_1-\tilde\alpha)\tilde f}
   {\tilde f_m\cdots\tilde f-\tilde\alpha}
 =\frac{f_m(f_{m-1}\cdots f_1-\alpha)}{f_m\cdots f-\alpha}
\end{equation}
defines the $n$-part of the B\"acklund transformation for equation
(\ref{dSK.ft}), that is, the relation obtained by differentiating of this
equation in virtue of (\ref{dSK.ft}) and a similar lattice equation for $\tilde
f,\tilde\alpha$ is fulfilled identically in virtue of the equation itself.
\end{statement}

\begin{remark}
An interesting property of equation (\ref{dSK.f}) at $\tilde\alpha=\alpha$ is
that it admits a reduction to an equation of order $m-1$. More precisely, one
can prove that it can be brought to the form $\tilde fT(Q)=f_mQ$ where $Q$ is a
polynomial depending on $f,\dots,f_{m-1},\tilde f,\dots,\tilde f_{m-1}$, so
that equation $Q=0$ defines a special solution of (\ref{dSK.f}). Moreover,
equation $Q=0$ is also consistent with the lattice equation (\ref{dSK.ft}). For
instance, in the simplest case $m=2$, the quad-equation appears (a discrete
analog of the Tzitzeica equation)
\[
 \tilde ff_1(\alpha^{-1}\tilde f_1f-\tilde f_1-f)+f_1+\tilde f-\alpha=0
\]
which is consistent with the flow
\[
 \partial_tf=\frac{f(f-\alpha)}{f_1ff_{-1}-\alpha}\left(
 \frac{f(f_1-\alpha)(f_{-1}-\alpha)(f_2f_1-f_{-1}f_{-2})}
      {(f_2f_1f-\alpha)(ff_{-1}f_{-2}-\alpha)}-f_1+f_{-1}\right).
\]
Equation (\ref{B.f}) admits an analogous lowering of the order.
\end{remark}

Refactorization of the above Darboux matrices is a rather difficult task.
Nevertheless, it is possible to write the general answer for arbitrary $m$ and
we arrive to the following statement. The proof of 3D-consistency follows from
the same general arguments as in the case of Statement \ref{st:B.NSP}.

\begin{statement}\label{st:dSK.NSP}
Let $F[f,\alpha,\lambda]=B^{-1}A$ where $A,B$ are matrices of the form
(\ref{dSK.A}), (\ref{dSK.B}) and $\alpha^{(i)}\ne\alpha^{(j)}$. Then
equation
\[
  F[T_i(f^{(j)}),\alpha^{(j)},\lambda]\,F[f^{(i)},\alpha^{(i)},\lambda]
 =F[T_j(f^{(i)}),\alpha^{(i)},\lambda]\,F[f^{(j)},\alpha^{(j)},\lambda]
\]
considered as an identity with respect to $\lambda$ is uniquely solved with
respect to $T_i(f^{(j)}),T_j(f^{(i)})$, for generic values of $f^{(i)}$,
$f^{(j)}$. This gives rise to the $m$-component mapping
\begin{equation}\label{dSK.Tf}
 T_j(f^{(i)}_s)=\frac{P^{(i,j)}_{s+1}}{f^{(j)}_sP^{(i,j)}_s},
 \quad s=0,\dots,m-1
\end{equation}
where
\begin{gather*}
\begin{aligned}
 P^{(i,j)}_s &= \alpha^{(i)}f^{(j)}_{m-1}\cdots f^{(j)}
    -\alpha^{(j)}f^{(i)}_{m-1}\cdots f^{(i)}\\
 &\qquad
  +\sum^{m-1}_{r=0}\gamma^{(i,j)}_{r,s}(f^{(i)}_r-f^{(j)}_r)
   f^{(i)}_{r-1}\cdots f^{(i)}f^{(j)}_{r-1}\cdots f^{(j)},
\end{aligned}\\
 \gamma^{(i,j)}_{r,s}=
  \begin{cases}
   1 & s\le r,\\
   \alpha^{(i)}\alpha^{(j)} & r<s.
  \end{cases}
\end{gather*}
This mapping satisfies the 3D-consistency property
\[
 T_kT_j(f^{(i)}_s)=T_jT_k(f^{(i)}_s),\quad s=0,\dots,m-1.
\]
\end{statement}

Now, the proof of Theorem \ref{th:dSK} is obtained by introducing of the
potential $v$ according to equations
\[
 u=\frac{v_m}{v},\quad f^{(i)}=\frac{v^i}{v}.
\]
Under these substitutions, equations (\ref{dSK.uf}) turn into $m$-quad-equation
(\ref{dSK.vi}), the mapping (\ref{dSK.Tf}) turn into $m$-component
quad-equation (\ref{dSK.vij}), and the lattice equations (\ref{dSK.ut}),
(\ref{dSK.ft}) turn into equation (\ref{dSK.vt}). The consistency follows from
the proven statements.

\section*{Acknowledgements}
\addcontentsline{toc}{section}{Acknowledgements}

Research for this article was supported by grants RFBR 13-01-00402a and
NSh--5377.2012.2.

\phantomsection
\addcontentsline{toc}{section}{References}

\end{document}